\definecolor{linkcolor}{rgb}{0.5,0.0,0.0}
\definecolor{citecolor}{rgb}{0.0,0.5,0.0}
\definecolor{urlcolor} {rgb}{0.0,0.0,0.5}
\title[Abundant superintegrability and Hessian structures]
	{Abundant superintegrable systems\\ and Hessian structures}
\subjclass[2020]{
	Primary
	53B12;  
	Secondary
	70H33,  
	70H06.  
}
\author{John Armstrong}
\email{john.armstrong@kcl.ac.uk}
\address{Department of Mathematics, King's College London, London WC2R 2LS, UK}
\author{Andreas Vollmer}
\email{andreas.vollmer@uni-hamburg.de}
\address{Fachbereich Mathematik, University of Hamburg,	20146 Hamburg, Germany}
\numberwithin{equation}{section}
\newtheorem{theorem}{Theorem}[section]
\newtheorem{proposition}[theorem]{Proposition}
\theoremstyle{definition}
\newtheorem{definition}[theorem]{Definition}
\theoremstyle{remark}
\newtheorem{remark}[theorem]{Remark}
\newtheorem{example}[theorem]{Example}
\newtheorem{observation}[theorem]{Observation}
\setlist[enumerate,1]{label=(\roman*)}
\newcommand{\RR}{\ensuremath{\mathds{R}}}
\newcommand{\grad}{\ensuremath{\mathrm{grad}}}
\newcommand{\Sym}{\ensuremath{\mathrm{Sym}}}
\newcommand{\trace}{\ensuremath{\mathrm{trace}}}
\newcommand{\Span}{\ensuremath{\mathrm{span}}}
\begin{document}

\begin{abstract}
    We show that a large class of non-degenerate second-order (maxi\-mally) superintegrable systems gives rise to Hessian structures, which admit natural (Hessian) coordinates adapted to the superintegrable system.
    In particular, abundant superintegrable systems on Riemannian manifolds of constant sectional curvature fall into this class. We explicitly compute the natural Hessian coordinates for examples of non-degenerate second-order superintegrable systems in dimensions two and three.
\end{abstract}

\maketitle
\tableofcontents

\section{Introduction}

Let $(M,g)$ be a (smooth) Riemannian manifold (the setup in this paper is of a local nature). A \emph{Hessian structure} on $(M,g)$ is given by a flat torsion-free connection~$D$ such that
\begin{equation*}
	g = Dd\psi
\end{equation*}
for a function $\psi\in\mathcal C^\infty(M)$ on $M$. The function $\psi$ is called \emph{Hessian potential} of $(M,g,D)$. The Hessian potential is not unique, as after adding any solution $f$ of $Ddf=0$, another Hessian potential for the same Hessian structure is obtained.
It was recently shown by R.~Bryant that metrics in dimension~$2$ are always locally hessianisable, i.e.~of the above form \cite{bryant2024}, see also \cite{Amari&Armstrong} where this result was independently obtained for analytic metrics.

\begin{remark}
	In \cite{Noguchi}, \emph{statistical manifolds} $(M,g,D^\pm)$ are introduced as defined by a Riemannian manifold $(M,g)$ endowed with a pair of torsion-free connections $D^\pm$ that are dual to each other, i.e.\ they satisfy
	\begin{equation*}
		X(g(Y,Z))=g(D^+_XY,Z)+g(Y,D^-_XZ)\,,
	\end{equation*}
	where $X,Y,Z\in\mathfrak X(M)$.
	Similarly, a Riemannian manifold $(M,g)$ endowed with a pair $D^\pm$ of dual connections that are both flat is called a \emph{$g$-dually flat structure} \cite{Amari&Armstrong}.
	We may thus interpret a Hessian structure $(M,g,D)$ as a $g$-dually flat, statistical manifold. Indeed, the dual connection $D^*$ of $D$ is also flat (and torsion-free). Given a $g$-dually flat, statistical manifold, we obtain a pair of associated Hessian structures.
\end{remark}

The (locallly defined) Hessian potentials of the connections $D$ and $D^*$ of a Hessian structure are (locally) linked by a Legendre-Fenchel transform, which is well-defined on a convex set $\mathbb X$, since $g$ is smooth and positive definite.
Denote by~$\mathbb X^*$ the dual space of $\mathbb X$ and by $\langle\cdot,\cdot\rangle:\mathbb X^*\times\mathbb  X\to\RR$ the natural pairing. The Legendre-Fenchel transform of $f:\mathbb X\to\RR$ then is the function $F:\mathbb Y\to\RR$,
\[
	F(y):=\sup_{x\in\mathbb  X}(\langle y,x\rangle-f(x))
\]
defined on the set
$ \mathbb Y=\left\{x\in \mathbb X^*\,:\,\sup_{x\in\mathbb X}(\langle y,x\rangle-f(x))<\infty\right\}\subset \mathbb X^* $.

A Hessian structure $(M,g,D)$ induces a distinguished family of local coordinate systems: for a point $p\in M$, consider normal coordinates around $p$, i.e.\ given by solutions of the geodesic equation via the exponential mapping around $p$. Such coordinates are called \emph{Hessian coordinates}. Obviously, Hessian coordinates are not unique.\medskip

Let $(M,g)$ be a simply connected (smooth) Riemannian manifold, and consider the function $H:T^*M\to\RR$,
$$ H(x,p) = g^{ij}(x)p_ip_j+V(x) $$
(Einstein's convention applies), called the \emph{Hamiltonian}, where $(x,p)$ denote canoni\-cal Darboux coordinates on $T^*M$. The function $V=V(x)$ is called the \emph{potential}.
The Hamiltonian $H$ is said to be (maximally) \emph{superintegrable} if there exist $2n-2$ functions $F^{(\alpha)}:T^*M\to\RR$, $1\leq\alpha\leq 2n-2$, such that
\begin{equation}\label{eq:integral}
	\{H,F^{(\alpha)}\}=0
\end{equation}
and such that $(F^{(\alpha)})_{0\leq\alpha\leq 2n-2}$ are functionally independent, where we agree that $F^{(0)}=H$.
A \emph{superintegrable system} defined by $(H=F^{(0)},\dots,F^{(2n-2)})$ is called \emph{second-order} if the integrals $F^{(\alpha)}$ can be chosen in the form
\[ F(x,p) = K^{ij}(x)p_ip_j+W(x)\,, \]
where we omit the superscript $(\alpha)$ for better legibility.
%
%
Note that for second-order systems, \eqref{eq:integral} is an inhomogeneous cubic polynomial in momenta whose homogeneous parts have to be satisfied independently. Its leading part is equivalent to the condition that $K_{ij}=g_{ia}g_{jb}K^{ab}$ are components of a Killing tensor field. Its linear homogeneous component, on the other hand, is equivalent to the condition
\begin{equation}\label{eq:pre-BD}
	dW = \hat K(dV)\,,
\end{equation}
where we denote by~$\hat K$ the endomorphism associated to the Killing tensor $K$ via~$g$.
We will tacitly use the circumflex with this meaning from now on and omit the diacritic when using index notation.
Applying the differential to~\eqref{eq:pre-BD}, we have the equation
\begin{equation}\label{eq:BD}
	d(\hat K(dV))=0 \,,
\end{equation}
called \emph{Bertrand-Darboux} condition \cite{bertrand_1857,darboux_1901}. Note that ~\eqref{eq:BD} holds for any endomorphism $\hat K$ associated with an integral $F^{(\alpha)}$, $0\leq\alpha\leq 2n-2$, and moreover for any element in the linear span $\mathcal K$ of these endomorphisms. By a slight abuse of notation, we denote the space of associated Killing tensors by the same symbol $\mathcal K$.
Following~\cite{KSV2023,KSV2024}, we now introduce \emph{irreducible} superintegrable systems as second-order (maximally) superintegrable systems such that $\mathcal K$ forms an irreducible set, i.e.\ a set of endomorphisms that do not have common eigenvectors.
\begin{definition}
	Let $(M,g)$ be a Riemannian manifold. The tuple $(M,g,\mathcal V,\mathcal K)$ is then called an \emph{irreducible second-order (maximally) superintegrable systems} (or \emph{irreducible system} for short) if
	\begin{enumerate}[label=(\roman*)]
		\item $\mathcal K$ forms an irreducible linear subspace in the space of Killing tensors of $g$ of dimension $\dim(\mathcal K)\geq 2n-1$,
		\item for any $V\in\mathcal V$ and $\hat K\in\mathcal K$, \eqref{eq:BD} holds,
		\item there exist $K^{(\alpha)}\in\mathcal K$, $0\leq\alpha\leq 2n-2$, with $K^{(0)}=g$ and such that $(K^{(\alpha)})_{0\leq\alpha\leq 2n-2}$ are functionally independent functions $T^*M\to\RR$.
	\end{enumerate}
\end{definition}
Note that this definition is consistent with the previous definition of a second-order (maximally) superintegrable systems, since \eqref{eq:pre-BD} allows us to reconstruct the integrals $F^{(\alpha)}$ in that definition.
It is proved in \cite{KSV2023} that for \emph{irreducible} superintegrable systems a potential $V\in\mathcal V$ satisfies
\begin{equation}\label{eq:wilczynski}
	\nabla^2V=\hat T(dV)+\frac1n\,g\,\Delta V\,,
\end{equation}
where $\hat T\in\Gamma(\Sym_2(T^*M)\otimes TM)$ is the structure tensor, and where $\nabla^2$ and $\Delta$ denote the Hessian and, respectively, the Laplace-Beltrami operator with respect to the Levi-Civita connection of $g$. It follows that $\hat T(dV)$ is trace-free (with respect to $g$). It determines an associated 1-form $t\in\Omega^1(M)$, defined by
$$ t = \trace(\,\hat T\,)\,. $$
For later reference, we also introduce the tensor field $T\in\Gamma(\Sym_3(T^*M))$ associated to $\hat T$ via the metric $g$,
\[
	T(X,Y,Z)=\hat T(X,Y)(g(Z))
\]
for $X,Y,Z\in\mathfrak{X}(M)$. By a slight abuse of terminology, we also call $T$ structure tensor, along with $\hat T$, as there is no risk of confusion.
From now on we will exclusively speak about irreducible second-order (maximally) superintegrable systems, and therefore suppress these attributes for brevity whenever there is no risk of confusion.

\section{Non-degenerate systems and Hessian structures}

An irreducible superintegrable system is called a \emph{non-degenerate} system if the space $\mathcal V$ of potentials has dimension $n+2$, $\dim(\mathcal V)=n+2$. Note that in the case of an analytic metric (and hence analytic Killing tensors), this dimension coincides with the number of integration constants in \eqref{eq:wilczynski}. This allows one to choose, at a point $p\in M$, the values of $V(p)$, $\nabla V(p)$ and $\Delta V(p)$ freely.
In practical computations, $\mathcal V$ is typically presented as a $(n+2)$-parameter linear family of potentials.
We discuss some results of~\cite{KSV2023,KSV2024}, and we then prove that non-degenerate systems often induce Hessian structures. In the next section, we will then focus on a particular case, namely abundant non-degenerate systems on Riemannian manifolds of constant sectional curvature in dimension $n\geq3$.
On such spaces, the existence of a Hessian structure can be deduced immediately from a statement proved in \cite{KSV2023}.

We discuss some definitions and results in~\cite{KSV2023,KSV2024} that allow us to draw conclusions regarding Hessian structures.
Let $(M,g)$ be a Riemannian manifold of dimension $n\geq3$ endowed with a non-degenerate system with structure tensor $T=T_{ijk}dx^i\otimes dx^j\otimes dx^k$.
It is shown in the reference that its components satisfy
\begin{equation}\label{eq:linear}
	T_{ijk}=S_{ijk}+\frac{n}{(n-1)(n+2)}\left( t_ig_{jk}+t_jg_{ik}-\frac2n\,g_{ij}t_k\right)\,,
\end{equation}
where $S_{ijk}$ are the components of a totally symmetric and trace-free tensor field~$S$. For later reference we also introduce the $(2,1)$-tensor field $\hat S\in\Gamma((T^*M)^{\otimes2}\otimes TM)$ associated to $S$ via the inverse metric by
\[
	\hat S(X,Y)(\alpha) = S(X,Y,g^{-1}\alpha)
\]
for $X,Y\in\mathfrak X(M)$ and $\alpha\in\Omega^1(M)$.
Following \cite{KSV2023}, we also introduce the totally symmetric tensor field $A\in\Gamma(\mathrm{Sym}_3(T^*M))$ that satisfies
\begin{equation}\label{eq:A.from.T}
	A = \frac13\left( T+\frac{1}{n-1}\,g\otimes t\right).
\end{equation}
We write $\hat A:=g^{-1}A$ for the corresponding $(1,2)$-tensor field.
Subsequently, we can introduce a pair of torsion-free connections by
\begin{equation}\label{eq:nabla+-}
	\nabla^{\pm\hat A} := \nabla^g \pm \hat A\,.
\end{equation}
We will also write $\nabla^{\pm}$ instead of $\nabla^{\pm\hat A}$, respectively, if there is no risk of confusion.
We now prove the following theorem for non-degenerate systems.
For a more concise notation, we introduce the 1-form $\bar t\in\Omega^1(M)$ associated to $t$ via
\begin{equation}
	\bar t=\frac{n}{(n-1)(n+2)}\,t\,.
\end{equation}
\begin{theorem}\label{thm:main}
	Let $(M,g)$ be a conformally flat Riemannian manifold of dimension~$n\geq3$, which we assume to be endowed with a non-degenerate system with structure tensor $T$. Furthermore, we assume that
	\begin{subequations}\label{eqs:hessian.conditions}
	\begin{align}
		\mathring{R}_{ij} &= \frac19\,
			\Pi_\circ\left(
			\mathring T\indices{_{i}^{ab}}\mathring T_{jab}-(n-2)(\mathring T_{ija}\bar t^a+\bar t_i\bar t_j)
		\right)
		\label{eq:hessian.condition.1}
		\\
		R &= \frac19\left( |\mathring{T}|^2-(n+2)(n-1)|\grad_g\bar t|^2 \right)
		\label{eq:hessian.condition.2}
	\end{align}
	\end{subequations}
	where $\mathring{R}_{ij}$ denotes the trace-free part of the Ricci tensor and $R$ the scalar curvature, and where $\Pi_\circ$ is the projection onto the trace-free part (with respect to $g$).
	Then the underlying metric $g$ is Hessian.
\end{theorem}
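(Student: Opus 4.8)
The plan is to show that the metric is Hessian with respect to the connection $\nabla^{+\hat A}$ of \eqref{eq:nabla+-} (equivalently its dual $\nabla^{-\hat A}$). Since $A$ is totally symmetric by \eqref{eq:A.from.T}, the endomorphism-valued form $\hat A=g^{-1}A$ is symmetric in its two covariant slots, so both $\nabla^{\pm\hat A}$ are torsion-free. Because $\nabla^g$ is metric and $A$ is totally symmetric, a one-line computation gives $(\nabla^{\pm\hat A}_Xg)(Y,Z)=\mp2\,A(X,Y,Z)$, which is totally symmetric; the same computation shows that $\nabla^{+\hat A}$ and $\nabla^{-\hat A}$ are mutually $g$-dual. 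Thus $(M,g,\nabla^{+\hat A},\nabla^{-\hat A})$ is a statistical manifold, and in the language of the Remark in the introduction it carries a Hessian structure exactly when these connections are flat. As the curvatures of a dual pair are related by $g(R^{\nabla^{+\hat A}}(X,Y)Z,W)=-g(Z,R^{\nabla^{-\hat A}}(X,Y)W)$, flatness of one is equivalent to flatness of the other, and it suffices to prove $R^{\nabla^{+\hat A}}=0$.

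To this end I would expand the curvature of $\nabla^{+\hat A}=\nabla^g+\hat A$ by the standard formula
\begin{equation*}
	R^{\nabla^{+\hat A}}(X,Y)Z = R^g(X,Y)Z + (\nabla^g_X\hat A)(Y,Z) - (\nabla^g_Y\hat A)(X,Z) + \hat A(X,\hat A(Y,Z)) - \hat A(Y,\hat A(X,Z)).
\end{equation*}
There are three contributions: the Riemann curvature $R^g$, a linear term in $\nabla^g\hat A$, and an algebraic term quadratic in $\hat A$. Two structural inputs make this tractable. First, conformal flatness in dimension $n\geq3$ forces the Weyl tensor to vanish, so $R^g$ is determined by the Ricci tensor alone, i.e.\ by $\mathring R_{ij}$ and $R$ through the Schouten tensor. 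Second, the prolongation of the superintegrable structure obtained in \cite{KSV2023,KSV2024} — the differential consequences of \eqref{eq:wilczynski} together with the decomposition \eqref{eq:linear} — expresses $\nabla^g T$, and hence $\nabla^g A$, in terms of $g$, $T$, the one-form $\bar t$, the curvature, and the covariant derivative $\nabla^g\bar t$.

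Substituting both inputs, the equation $R^{\nabla^{+\hat A}}=0$ becomes a tensorial identity that is linear in the curvature and $\nabla^g\bar t$ and quadratic in $(\mathring T,\bar t)$; the factor $\tfrac19$ in \eqref{eqs:hessian.conditions} is the square of the $\tfrac13$ in \eqref{eq:A.from.T}, and the $\nabla^g\bar t$ terms account for $|\grad_g\bar t|^2$ in \eqref{eq:hessian.condition.2}. I would then decompose this identity into its irreducible components under $\mathrm O(g)$. The expectation is that the totally trace-free (Weyl-type) and the antisymmetric components vanish identically, as a consequence of conformal flatness and the trace-freeness and symmetry of $S$ in \eqref{eq:linear}, so that $R^{\nabla^{+\hat A}}=0$ is equivalent to the vanishing of its Ricci-type contraction. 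The trace-free symmetric part of this contraction is then exactly \eqref{eq:hessian.condition.1}, and its pure-trace part is exactly \eqref{eq:hessian.condition.2}.

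The hard part is this last reduction. The delicate points are extracting the correct prolongation formula for $\nabla^g T$ from \cite{KSV2023,KSV2024} and rewriting it in the variables $\mathring T$ and $\bar t$; verifying that the Weyl-type and antisymmetric components of $R^{\nabla^{+\hat A}}$ genuinely drop out, which is what lets the two stated conditions control full flatness rather than merely the vanishing of a Ricci-type contraction; and tracking the numerical coefficients — the projection $\Pi_\circ$, the factors $(n-2)$ and $(n+2)(n-1)$, and the sign of the $|\grad_g\bar t|^2$ term — through the trace decomposition so that \eqref{eqs:hessian.conditions} emerges precisely. Once the prolongation identity and these component computations are in place, the conclusion follows by the equivalence established in the first paragraph.
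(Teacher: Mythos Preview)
Your strategy matches the paper's: expand the curvature of $\nabla^{\pm\hat A}$, use conformal flatness together with the integrability conditions of Proposition~4.4 in \cite{KSV2023} to kill the Weyl-type and derivative pieces, and reduce flatness to the Ricci-type contraction, which is then identified with \eqref{eqs:hessian.conditions}. The paper streamlines two points you overcomplicate: the linear-in-$\nabla^g\hat A$ term vanishes outright from the Codazzi identity $\nabla_lA_{ijk}=\nabla_kA_{ijl}$ (this is exactly the $A$-rewriting of the differential integrability condition \eqref{eq:integrability.differential}, so no full prolongation formula is needed), and $|\grad_g\bar t|^2$ in \eqref{eq:hessian.condition.2} is simply $g^{ij}\bar t_i\bar t_j$, not a derivative term --- so no $\nabla^g\bar t$ appears anywhere, and the Ricci contraction of $R^{\nabla^{+\hat A}}$ is a purely algebraic expression in $\mathring T$ and $\bar t$ that the paper computes directly as $9(A^a_{ib}A^b_{ja}-A^{ab}_aA_{bij})$.
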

\noindent We remark that the statement is trivial in dimension~$2$ \cite{bryant2024,Amari&Armstrong}, and hence the requirement $n\geq3$ is of a technical nature.
\begin{proof}
	We use Proposition~4.4 of~\cite{KSV2023}, where it is shown that the integrability conditions for~\eqref{eq:wilczynski} under the hypothesis are equivalent to the conditions~\eqref{eq:linear},
	\begin{align}
		\Pi_\circ\left( T\indices{^a_{ik}}T_{ajl}-T\indices{^a_{jk}}T_{ail}\right) &= 0
		\\
		\Pi_\mathrm{Weyl} T\indices{^a_{ik}}T_{ajl} &= 0 
		\label{eq:integrability.quadratic}
		\\
		\Pi_{ \resizebox{7pt}{!}{\yng(3,1)} }\ T_{ijk,l} &= 0\,. 
		\label{eq:integrability.differential}
	\end{align}
	Here, $\Pi_\mathrm{Weyl}$ is the projector onto the Weyl symmetric part, and $\Pi_{ \resizebox{7pt}{!}{\yng(3,1)} }$ is the projector that first skew symmetrises in $k,l$ and then symmetrises in $i,j,k$.
	Rewriting the equations~\eqref{eq:integrability.quadratic} and~\eqref{eq:integrability.differential} in terms of $A_{ijk}$, we obtain, c.f.~\cite{KSV2023},
	\begin{align}
		\Pi_\circ\left( A^a_{ik}A_{ajl}-A^a_{jk}A_{ail} \right) &= 0
		\\
		\nabla_lA_{ijk}-\nabla_kA_{ijl} &= 0\,,
	\end{align}
	(compare \cite{KSV2023}, where a similar computation is performed in the specific case of constant curvature).
	Recalling~\eqref{eq:linear} and~\eqref{eq:A.from.T}, a direct computation shows
	\[
		9 \left( A^{a}_{i b} A^{b}_{j a}-A_{a}^{a b} A_{b i j} \right)
		= S\indices{_i^{ab}}S_{jab}-(n-2)\left(S_{ija}\bar t^a+\bar t_i\bar t_j\right) + ng_{ij}\,|\bar t|^2\,.
	\]
	Due to~\eqref{eq:hessian.condition.1}, this implies, for the Ricci tensor of $g$,
	\[
		R_{ij} = A^{a}_{i b} A^{b}_{j a}-A_{a}^{a b} A_{b i j}\,.
	\]
	We conclude that the two connections $\nabla^{\pm}=\nabla^{\pm\hat A}$ are both flat, and that $(M,g,\nabla^\pm)$ hence  are Hessian structures.
\end{proof}

\section{Hessian potentials and structure functions of abundant systems}

The hypotheses of Theorem~\ref{thm:main} apply, in particular, to abundant systems on Riemannian manifolds of constant sectional curvature in dimension $n\geq3$.
A non-degenerate system is called an \emph{abundant} system if \eqref{eq:BD} holds for any $V\in\mathcal V$ and for any $K\in\mathcal K^\text{abt}$ where $\mathcal K^\text{abt}$ is a linear space of Killing tensors of dimension $\dim(\mathcal K^\text{abt})=\frac{n(n+1)}{2}$. Non-degenerate systems in dimension $n\in\{2,3\}$ are necessarily abundant \cite{KKM05c}, and in fact all known examples of non-degenerate systems (for any $n$) are abundant ones.

In the case of abundant systems in dimension $n\geq3$, the statement of Theorem~\ref{thm:main} follows immediately from \cite{KSV2023}.
This case is also the main application of Theorem~\ref{thm:main}. Indeed, the statement of Theorem~\ref{thm:main} is obtained from Corollary~7.17 in~\cite{KSV2023}, which states, for an abundant system on a Riemannian manifold of constant sectional curvature in $n\geq3$, that the torsion-free and Ricci-symmetric connection $\nabla^{-\hat A}$ is flat, where $\hat A$ is defined as above in~\eqref{eq:nabla+-}. We observe that it follows that $\nabla^{+\hat A}$ is flat as well. Therefore $(M,g,\nabla^{\pm\hat A})$ defines a pair of Hessian geometries on $M$.

\begin{example}\label{ex:canonical.hessian.structure}
	Let $(M^n,g)$ be a Riemannian manifold with dimension $n\geq2$ and of constant curvature $\kappa$ equipped with an abundant superintegrable system with structure tensor $T$ (we define $\hat A$ as above).
	Then $(M,g)$ is endowed with the flat $g$-dual structure $\nabla^\pm=\nabla^{\pm\hat A}$, and therefore carries a Hessian structure.
	Note that in this example we have included the dimension $n=2$, since any surface carries, at least locally, a Hessian structure \cite{bryant2024,Amari&Armstrong}.
\end{example}

From now on, we assume the Riemannian manifold $(M,g)$, $n\geq2$, to be of constant sectional curvature $\kappa$, and endowed with an abundant system.
It is endowed with the pair of flat connections $\nabla^\pm$, giving rise to a pair of Hessian potentials $\psi_\pm$ that satisfy
\begin{subequations}\label{eq:hessian}
\begin{align}
	g &= \nabla^{+}d\psi_+ = \nabla^{-}d\psi_-\,,
	\label{eq:g.hessian}
	\\
	\label{eq:A.hessian}
	A &= \frac12(\nabla^{+})^3\psi_+ = -\frac12(\nabla^{-})^3\psi_-
\end{align}
\end{subequations}
where $\nabla^\pm=\nabla^{\pm\hat A}$, as before.

\begin{remark}
	The Hessian potential is determined up to terms linear in the Hessian coordinates $\eta_k$.
	Indeed, if $\psi,\phi$ are two potentials such that, locally,
	\[
		g_{ij} = \frac{\partial^2\psi}{\partial{\eta_i}\partial{\eta_j}}=\frac{\partial^2\phi}{\partial{\eta_i}\partial{\eta_j}}\,,
	\]
	then
	\[
		\frac{\partial\psi}{\partial\eta_k}=\frac{\partial\phi}{\partial\eta_k}+c_k\,,
	\]
	and hence
	\[
		\psi=\phi+\sum_{k=1}^n c_k\eta_k+c_0\,,
	\]
	for some $c_k\in\RR$.
	This allows one to obtain, from the general solution of~\eqref{eq:hessian}, the Hessian coordinates for each superintegrable system.
\end{remark}

Let $V$ be an affine space, with the same dimension $n$ as $M$.
\emph{Hessian coordinates} for the Hessian potential $\psi$ are defined by a bijective function
\begin{equation*}
	\eta:M\to V
\end{equation*}
for $i\in\{1,\dots,n\}$.
We obtain the \emph{natural Hessian coordinates} for the given Hessian potential $\psi:M\to\RR$ via $\eta^k:=\dot\gamma^k(\psi)$, where $(\gamma^k)$ denotes a geodesic coordinate system on $M$, $\gamma^k\in\mathfrak{X}(M)$.
Now consider two sets of Hessian coordinates $\eta$ and $\theta$, for Hessian potentials $\psi$ and $\phi$, respectively,
\begin{align*}
	\eta &:M\to V \\
	\theta &:M\to W
\end{align*}
where $V$ and $W$ are (different) affine spaces.
Then there is a natural mapping $\langle-,-\rangle:V\times W$ determined via
\[
	\langle v,w\rangle:=g(\eta^*(v),\theta^*(w))
\]
for $v\in V$, $w\in W$, where we identify $V$ and $W$ with their tangent spaces.
We can therefore interpret $W$ as the dual space of $V$.

In the remainder of the section, we discuss the Hessian structures on Riemannian manifolds $(M,g)$ of constant sectional curvature and dimension $n\in\{2,3\}$ that are naturally associated with some well-known non-degenerate systems.
We compute the Hessian potentials by solving the system~\eqref{eq:hessian}. Indeed, we will solve the equations for $\nabla^-=\nabla^{-\hat A}$.
We also compute the Hessian coordinates $(\eta^-_1,\dots,\eta^-_n)$ for each system.
As mentioned earlier, the Hessian potential $\psi_-$ for $\nabla^-$ is determined up to terms linear in the Hessian coordinates $\eta^-_i$ only. Making an (obvious) choice for the integration constants, we hence fix the Hessian potential $\psi_-$ and associated Hessian coordinates~$\eta^-:M\to\RR^n$.

We then obtain the \emph{dual Hessian coordinates} $\eta_+^i$ and the \emph{dual Hessian potential}~$\psi_+$ as follows, imposing requirements that ensure that the dual potential $\psi_+$ is the Legendre-Fenchel transform of $\psi_-$, see~\cite{Amari2009} for example: 
\begin{align*}
	\frac{\partial\psi_-}{\partial\eta^-_i} &= \eta_+^i\ \,, &
	\frac{\partial\psi_+}{\partial\eta_+^i} &= \eta^-_i\ \,,
\end{align*}
and
\[
	\psi_++\psi_- = \langle\eta_+,\eta^-\rangle\,.
\]
Concretely, given $\psi_-$ and $\eta^-$, we first compute $\eta_+$ using the above formula. We then set
\[
	\psi_+= -\psi_- + \langle\eta_+,\eta^-\rangle\,.
\]
The remaining condition is redundant and serves as a consistency check.

We say that a Hessian structure is \emph{self-dual} if the dual Hessian potential and the Hessian potential coincide up to linear terms in the Hessian coordinates.
Note that this definition is well-posed. Indeed, if we add to $\psi_-$ (irrelevant) linear terms in $\eta^-$, we obtain
\[
	\eta_{+,\text{new}}^i=\frac{\partial}{\partial\eta^-_i}\bigg(
			\underbrace{ \psi_-+\sum_kc_k\eta^-_k }_{=\psi_{-,\text{new}}}
		\bigg)
			=\eta_{+}^i+c_i
\]
and then
\begin{align*}
	\psi_{+,\text{new}}
	&= -\psi_{-,\text{new}} - \langle\eta_{+,\text{new}},\eta^-\rangle \\
	&= \psi_+ -\sum_kc_k\eta^-_k- \sum_k c_k\eta^- 
	= \psi_+ -\sum_k(2c_k)\eta^-_k\,.
\end{align*}
Hence, also $\psi_+$ is modified by linear terms in $\eta^-$ only.

\subsection{Hessian coordinates for non-degenerate systems in dimensions two}

We recall that all non-degenerate systems in dimensions two are trivially abundant and that non-degenerate systems in dimension three are classified, c.f.~\cite{Kalnins&Kress&Pogosyan&Miller,KKM05c,KMP00a}.
The $2$-dimensional systems on flat space, i.e.~$(M,g)$ with
\[
	g=dx^2+dy^2,
\]
are the first and second Smorodinski-Winternitz system, (i) respectively (ii), and the non-degnerate harmonic oscillator system (ho), which are defined by the following potentials:
\begin{align*}
	(i) && V&= a_0(x^2+y^2)+\frac{a_1}{x^2}+\frac{a_2}{y^2} + a_3  && (x\ne0,y\ne0)
	\\
	(ii) && V&= a_0(4x^2+y^2)+a_1x+\frac{a_2}{y^2} + a_3 && (y\ne0)
	\\
	(ho) && V &= a_0(x^2+y^2)+a_1x+a_2y + a_3\,.
\end{align*}
We also consider a system on the round $2$-sphere, i.e.\ for the metric
\[
	\qquad\qquad g=dX^2+dY^2+dZ^2\quad \text{  (with $X^2+Y^2+Z^2=1$)},
\]
where $(X,Y,Z)$ are standard coordinates on $\RR^3$, namely the so-called \emph{generic system} defined by the potential $(X\ne0,Y\ne0,Z\ne0)$
\[
	V= \frac{a_0}{X^2}+\frac{a_1}{Y^2}+\frac{a_2}{Z^2}+a_3\,,
\]
where $a_i\in\RR$. This system plays a pivotal role in the classification of $2$-dimensional second-order (maximally) superintegrable systems, c.f.~\cite{KMP13,KKM18}.

\subsubsection{The $2$-dimensional harmonic oscillator system (ho)}
For the harmonic oscillator system, solving~\eqref{eq:wilczynski} yields $A=0$ by virtue of~\eqref{eq:A.from.T}.
By direct integration of~\eqref{eq:A.hessian}, we then obtain the Hessian potential as $\psi_-\in\Psi$, where
\[ \Psi = \frac12(x^2+y^2)+\Span(x,y,1)\,. \]
Here and in the following, $\Span$ denotes an arbitrary linear combination of the functions specified as arguments.
As explained earlier, we may choose the Hessian potential to be
\[
	\psi_-=\frac12(x^2+y^2)
\]
and the Hessian coordinates as
\[
	\eta^-_1=x\,,\qquad\eta^-_2=y\,.
\]
The dual Hessian coordinates are thus obtained as
\[
	\eta_+^1=\frac{\partial\psi_-}{\partial\eta^-_1}=x\quad\text{and}\quad
	\eta_+^2=\frac{\partial\psi_-}{\partial\eta^-_2}=y\,.
\]
We hence obtain
\[
	\psi_+=-\psi_-+\langle\eta^-,\eta_+\rangle = -\frac12(x^2+y^2)+x^2+y^2=\frac12(x^2+y^2)=\psi_-\,.
\]
We therefore conclude that the Hessian structure is \emph{self-dual}.

\subsubsection{The Smorodinski-Winternitz system (i)}
Solving~\eqref{eq:wilczynski} for the Smorodinski-Winternitz family (i) of potentials and using~\eqref{eq:A.from.T}, yields $(x\ne0,y\ne0)$
\[
	A = -\frac1x\,dx^3-\frac1y\,dy^3\,.
\]
By direct integration of~\eqref{eq:A.hessian}, we then obtain the Hessian potential as $\psi_-\in\Psi$, where
\[ \Psi = \frac12(y^2\ln|y|+x^2\ln|x|)+\Span(y^2,x^2,1) \]
Choosing
\[
	\psi_-=\frac12(y^2\ln|y|+x^2\ln|x|)\,,\qquad
	\eta^-=(x^2,y^2)\,,
\]
we then obtain
\[
	\psi_-=\frac12\,\left( \eta^-_1\ln\sqrt{\eta^-_1}+\eta^-_2\ln\sqrt{\eta^-_2} \right)
	=\frac14\,\left( \eta^-_1\ln\eta^-_1+\eta^-_2\ln\eta^-_2 \right)
\]
and hence
\[
	\eta_+=\left(\frac{\ln\eta^-_1}{4}+\frac14\,,\,\frac{\ln\eta^-_2}{4}+\frac14\right)
	=\left(\frac{\ln|x|}{2}+\frac14\,,\,\frac{\ln|y|}{2}+\frac14\right)\,.
\]
Next, we obtain
\[
	\langle\eta^-,\eta_+\rangle = \psi_-+\frac14(x^2+y^2)
\]
This implies $\psi_+=-\psi_-+\langle\eta^-,\eta_+\rangle = \frac14(x^2+y^2)$ and hence
\[
	\psi_+=\frac14\left( e^{2\eta_+^1-\frac12}+e^{2\eta_+^2-\frac12}\right)\,.
\]
We therefore conclude that the Hessian structure is \emph{not} self-dual.


\subsubsection{The Smorodinski-Winternitz system (ii)}
Solving~\eqref{eq:wilczynski} for the Smorodinski-Winternitz family (ii) of potentials and using~\eqref{eq:A.from.T}, yields $(y\ne0)$
\[
	A = -\frac1y\,dy^3\,.
\]
By direct integration of~\eqref{eq:A.hessian}, we then obtain the Hessian potential as $\psi_-\in\Psi$, where
\[ \Psi = \frac12(x^2+y^2\ln|y|) + \Span(x,y^2,1). \]
Choosing
\[
	\psi_-=\frac12(x^2+y^2\ln|y|)\,,\qquad
	\eta^-=(x,y^2)\,,
\]
we obtain
\[
	\psi_-=\frac12\,\left( (\eta^-_1)^2+\eta^-_2\ln\sqrt{\eta^-_2} \right)
	=\frac12\,(\eta^-_1)^2+\frac14\eta^-_2\ln\eta^-_2
\]
and hence
\[
	\eta_+=\left(\eta^-_1\,,\,\frac{\ln\eta^-_2}{4}+\frac14\right)
	=\left(x\,,\,\frac{\ln|y|}{2}+\frac14\right)\,.
\]
Next, we obtain
\[
	\langle\eta^-,\eta_+\rangle = x^2+\frac12y^2\ln|y|+\frac14y^2\,.
\]
This implies $\psi_+=-\psi_-+\langle\eta^-,\eta_+\rangle = \frac12x^2+\frac14y^2$ and hence
\[
	\psi_+=\frac12x^2+\frac14 e^{2\eta_+^2-\frac12}\,.
\]
We therefore conclude that the Hessian structure is \emph{not} self-dual.

\subsubsection{The 2-sphere (s2)}
Solving~\eqref{eq:wilczynski} and~\eqref{eq:A.from.T} for the system on the round sphere, and by a direct integration of~\eqref{eq:A.hessian}, we obtain the Hessian potential as $\psi_-\in\Psi$, where $(X\ne0,Y\ne0,Z\ne0)$
\[
	\Psi=\frac12\left( X^2\ln|X|+Y^2\ln|Y|+Z^2\ln|Z| \right)+\Span(X^2,Y^2,Z^2)
\]
subject to $X^2+Y^2+Z^2=1$.
We may choose the Hessian coordinates as
\[
	\eta^-=\left(X^2,Y^2\right)\,,
\]
and hence
\[
	\psi_-=\frac14\left(
		(\eta^-_1)\ln(\eta^-_1)+(\eta^-_2)\ln(\eta^-_2)+(1-\eta^-_1-\eta^-_2)\ln(1-\eta^-_1-\eta^-_2)
	\right)
\]
We obtain for the dual Hessian coordinates the expressions
\[
	\eta_+^1=\frac14\ln\frac{\eta^-_1}{1-\eta^-_1-\eta^-_2}\,,\quad
	\eta_+^2=\frac14\ln\frac{\eta^-_2}{1-\eta^-_1-\eta^-_2}\,,
\]
We therefore obtain
\[
	\psi_-=\frac14\left(
		\eta^-_1\eta_+^1+\eta^-_2\eta_+^2+\ln(1-\eta^-_1-\eta^-_2)
	\right)
\]
\[
	\psi_+
	=-\psi_-	+\eta^-_1\eta_+^1	+\eta^-_2\eta_+^2
	=-\frac14\ln(1-\eta^-_1-\eta^-_2)
	=-\frac12\ln(Z)
\]
We conclude that the Hessian structure associated with the 3-sphere system is \emph{not} self-dual.

%
%

\subsection{Hessian coordinates for non-degenerate systems in dimensions three}

We recall that all non-degenerate systems in dimension three are abundant, and that they are classified, compare~\cite{KKM05c,KKM07c,Capel_phdthesis,Capel}.
We shall discuss the systems defined by the potentials
\begin{align*}
	(I) && V&= a_0(x^2+y^2+z^2)+\frac{a_1}{x^2}+\frac{a_2}{y^2}+\frac{a_3}{z^2} + a_4
	& (x\ne0,y\ne0,z\ne0)
	\\
	(IIa) && V&= a_0(4x^2+4y^2+z^2)+a_1x+a_2y+\frac{a_3}{z^2} + a_4
	& (z\ne0)
	\\
	(IIb) && V&= a_0(4x^2+y^2+z^2)+a_1x+\frac{a_2}{y^2}+\frac{a_3}{z^2} + a_4
	& (y\ne0,z\ne0)
	\\
	(HO) && V &= a_0(x^2+y^2+z^2)+a_1x+a_2y+a_3z + a_4
	\\
	\intertext{where the underlying metric is $g=dx^2+dy^2+dz^2$. We also consider the so-called \emph{generic system} on the $3$-sphere, defined by the potential}
	(S3) && V&= \frac{a_0}{X^2}+\frac{a_1}{Y^2}+\frac{a_2}{Z^2}+\frac{a_3}{W^2}+a_4
	&
\end{align*}
where $(X,Y,Z,W)$ are standard coordinates on $\RR^4$ $(X\ne0,Y\ne0,Z\ne0,W\ne0)$.
The underlying metric in the case (S3) is $g=dX^2+dY^2+dZ^2+dW^2$ restricted to $X^2+Y^2+Z^2+W^2=1$, i.e.\ the round metric. Like its $2$-dimensional counterpart (s2), the generic system plays a pivotal role in the theory of $3$-dimensional second-order (maximally) superintegrable systems \cite{CKP15,KKM18}.
The mentioned examples (I)--(S3) fall under the hypotheses of Theorem~\ref{thm:main}.

\subsubsection{The Harmonic Oscillator (HO)}
Analogously to the 2-dimensional case, we obtain $A=0$. We then find for the Hessian potential $\psi_-$ that $\psi_-\in\Psi$, where
\[ \Psi = \frac12(x^2+y^2+z^2) + \Span(x,y,z,1) \]
Choosing
\[
	\psi_- = \frac12(x^2+y^2+z^2)\,,
\]
and the Hessian coordinates
\[
	\eta^-=(x,y,z)\,,
\]
the dual Hessian coordinates are straightforwardly found to be
\[
	\eta_+=(x,y,z)\,.
\]
The dual Hessian potential hence is
\[
	\psi_+=\psi_-=\frac12(x^2+y^2+z^2)\,,
\]
yielding that the Hessian structure associated to the 3-dimensional harmonic oscillator system is \emph{self-dual}.

\subsubsection{The Smorodinski-Winternitz system (I)}
Analogously to the 2-dimensional case, we obtain $(x\ne0,y\ne0,z\ne0)$
\[
	A = -\frac1x\,dx^3-\frac1y\,dy^3-\frac1z\,dz^3\,.
\]
and then find that $\psi_-\in\Psi$, where
\[ \Psi = \frac12\left(x^2\ln|x|+y^2\ln|y|+z^2\ln|z|\right) + \Span(x^2,y^2,z^2,1) \]
Choosing
\[
	\psi_- = \frac12\left(x^2\ln|x|+y^2\ln|y|+z^2\ln|z|\right)\,,
\]
and the Hessian coordinates
\[
	\eta^-=(x^2,y^2,z^2)\,,
\]
the dual Hessian coordinates are straightforwardly found to be
\[
	\eta_+=\left( \,\frac{\ln|x|}{2}+\frac14\,,\ \frac{\ln|y|}{2}+\frac14 \,,\ \frac{\ln|z|}{2}+\frac14\, \right)\,.
\]
The dual Hessian potential hence is
\[
	\psi_+=\frac14\left(e^{2\eta_+^1-\frac12} + e^{2\eta_+^2-\frac12} + e^{2\eta_+^3-\frac12}\right)\,,
\]
yielding that the Hessian structure associated to the 3-dimensional harmonic oscillator system is \emph{not} self-dual.

\subsubsection{The Smorodinski-Winternitz system (IIa)}
We obtain $(z\ne0)$
\[
	A = -\frac1z\,dz^3\,,
\]
and then find $\psi_-\in\Psi$, where
\[ \Psi = \frac12(x^2+y^2+z^2\ln|z|) + \Span(x,y,z^2,1)\,, \]
and choose the Hessian potential
\[
	\psi_- = \frac12(x^2+y^2+z^2\ln|z|)
\]
and the Hessian coordinates
\[
	\eta^- = (x,y,z^2)\,.
\]
We thus find, in a manner entirely analogous to the previous cases, the dual Hessian coordinates
\[
	\eta_+ = \left(x\,,\ y\,,\ \frac{\ln|z|}{2}+\frac14\right)
\]
and the dual Hessian potential
\[
	\psi_+ = \frac12(x^2+y^2)+\frac14\,e^{2\eta_+^3-\frac12}\,.
\]
The Hessian structure is not self-dual.

\subsubsection{The Smorodinski-Winternitz system (IIb)}
We obtain $(y\ne0,z\ne0)$
\[
	A = -\frac1y\,dy^3-\frac1z\,dz^3\,,
\]
and then find $\psi_-\in\Psi$, where
\[ \Psi = \frac12(x^2+y^2\ln|y|+z^2\ln|z|) + \Span(x,y^2,z^2,1)\,. \]
We choose the Hessian potential
\[
	\psi_- = \frac12(x^2+y^2\ln|y|+z^2\ln|z|)
\]
and the Hessian coordinates
\[
	\eta^- = (x,y^2,z^2)\,.
\]
In these coordinates, the Hessian potential takes the form
\[
	\psi_- = \frac12((\eta^-_1)^2+\frac12\eta^-_2\ln(\eta^-_2)+\frac12\eta^-_3\ln(\eta^-_3))
\]
We then obtain, in an entirely analogous manner to the previous cases, the dual hessian coordinates
\[
	\eta_+ = \left(x\,,\ \frac{\ln|y|}{2}+\frac14\,,\ \frac{\ln|z|}{2}+\frac14\right)
\]
and the dual Hessian potential
\[
	\psi_+ = \frac12x^2+\frac14\left( e^{2\eta_+^2-\frac12} + e^{2\eta_+^3-\frac12} \right)\,.
\]
The Hessian structure is not self-dual.

\subsubsection{The 3-sphere (S3)}
We find, analogously to the 2-dimensional case, $\psi_-\in\Psi$,
\[ \Phi = \frac12(X^2\ln|X|+Y^2\ln|Y|+Z^2\ln|Z|+W^2\ln|W|) +\Span(X^2,Y^2,Z^2,W^2,1)\,, \]
subject to $X^2+Y^2+Z^2+W^2=1$, $X\ne0,Y\ne0,Z\ne0,W\ne0$, and we then choose the Hessian potential
\[
	\psi_- = \frac12(X^2\ln|X|+Y^2\ln|Y|+Z^2\ln|Z|+W^2\ln|W|)
\]
as well as the Hessian coordinates
\[
	\eta^-=(X^2,Y^2,Z^2)\,.
\]
The dual Hessian coordinates are then obtained as
\[
	\eta_+ = \frac14\left(
				\ln\frac{\eta^-_1}{1-\eta^-_1-\eta^-_2-\eta^-_3}\,,\ 
				\ln\frac{\eta^-_2}{1-\eta^-_1-\eta^-_2-\eta^-_3}\,,\ 
				\ln\frac{\eta^-_3}{1-\eta^-_1-\eta^-_2-\eta^-_3}
			 \right)\,.
\]
The dual Hessian potential hence takes the form
\[
	\psi_+=-\frac12\ln|W|\,.
\]
The Hessian structure associated to the 3-sphere therefore is not self-dual.

\subsubsection{Self-duality for the Hessian potentials of flat abundant systems}

The discussion in the previous paragraphs implies that for the harmonic oscillator in arbitrary dimension we have
$A=0$,
and hence
\[
	\frac{\partial^3\psi}{\partial\eta_i\partial\eta_j\partial\eta_k} = 0
\]
for all $i,j,k$. Here $\psi$ can be either $\psi_-$ or $\psi_+$, and $\eta=(\eta_1,\dots,\eta_n)$ are understood to be the corresponding Hessian coordinates.
It follows \cite[\S~5]{Shima} that $\psi$ is a polynomial of degree two in $\eta$, or, equivalently, that its first Koszul form vanishes \cite[Cor.~5.4]{Shima}. Moreover, the vanishing of the first Koszul form implies that the Levi-Civita connection $\nabla$ of $g$ coincides with the flat connection $D^+$, respectively~$D^-$, which in turn implies $A=0$.
We therefore conclude:
\begin{proposition}
	Consider an abundant superintegrable system on a manifold $(M,g)$ of constant sectional curvature, and denote by $D$ (either of) the naturally associated flat connections. The first Koszul form of $(M,D,g=Dd\psi)$ vanishes, if and only if the system is the non-degenerate $n$-dimensional harmonic oscillator on~$M$.
	In this case, $(M,g)$ is flat.
\end{proposition}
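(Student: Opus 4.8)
The plan is to reduce the whole statement to the single algebraic condition $A=0$, by establishing the chain of equivalences
\[
	\text{(first Koszul form vanishes)}\iff A=0\iff T=0\iff\text{(system is the harmonic oscillator)},
\]
and to read off flatness as a by-product of $A=0$. The equivalence of the first two items is essentially the content of the paragraph preceding the statement: by~\eqref{eq:A.hessian} the cubic form of the Hessian structure $(M,\nabla^\pm,g)$ is $\pm2A$, so $A=0$ forces the third $\nabla^\pm$-derivative of $\psi_\pm$ to vanish and hence $\psi_\pm$ to be quadratic in the Hessian coordinates, which by~\cite[Cor.~5.4]{Shima} is equivalent to the vanishing of the first Koszul form; conversely, the vanishing of the first Koszul form forces $\nabla^g=\nabla^\pm$, i.e.\ $\hat A=0$. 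Thus it remains to prove $A=0\iff T=0\iff\text{harmonic oscillator}$ and to extract flatness.

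For $A=0\iff T=0$ I would argue purely algebraically from~\eqref{eq:A.from.T}. Assuming $A=0$ gives $T=-\tfrac{1}{n-1}\,g\otimes t$; taking the $g$-trace over the first two indices and using that $t$ is by definition the trace of $T$ yields a relation $t_k=\lambda\,t_k$ with a scalar factor $\lambda\neq1$ (its precise value depends only on the normalisation of the symmetric product $g\otimes t$), which forces $t=0$ and hence $T=0$. The converse is immediate, so $A=0$ is equivalent to $T=0$, i.e.\ to $\hat T=0$.

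Flatness is then immediate: if $A=0$ then by~\eqref{eq:nabla+-} the flat connections $\nabla^\pm$ coincide with the Levi-Civita connection $\nabla^g$, so $\nabla^g$ is flat and $(M,g)$ is flat. (Equivalently, the relation $R_{ij}=A^{a}_{ib}A^{b}_{ja}-A_{a}^{ab}A_{bij}$ established in the proof of Theorem~\ref{thm:main} gives vanishing Ricci curvature, and a Ricci-flat space of constant sectional curvature is flat.) Finally, on flat $(M,g)\cong\RR^n$ in Cartesian coordinates, $\hat T=0$ turns~\eqref{eq:wilczynski} into $\partial_i\partial_jV=\tfrac1n\,\delta_{ij}\,\Delta V$ for every $V\in\mathcal V$; the off-diagonal equations give $\partial_i\partial_jV=0$ for $i\neq j$, so $V=\sum_i f_i(x^i)$, and the diagonal ones then force every $f_i''$ to be one and the same constant. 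Integrating, $\mathcal V$ consists exactly of the potentials $V=\tfrac{c}{2}|x|^2+\sum_i b_ix^i+d$, i.e.\ the $(n+2)$-parameter harmonic-oscillator family, identifying the system as the non-degenerate $n$-dimensional harmonic oscillator.

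I expect the only genuinely delicate point to be the implication \emph{``first Koszul form vanishes $\Rightarrow A=0$''}: a priori the first Koszul form is a trace of the cubic form, and its vanishing need not annihilate the full cubic form. This is exactly where the rigidity of the abundant/constant-curvature setting enters through~\cite{Shima}, which upgrades the vanishing of the (trace-type) Koszul form to $\nabla^g=\nabla^\pm$, and I would take care to invoke it in precisely the form stated in the preceding paragraph. Everything else—the trace computation giving $A=0\iff T=0$, the deduction of flatness, and the integration of the Hessian equation—is routine.
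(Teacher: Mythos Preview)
Your proposal is correct and follows essentially the same route as the paper. The paper's ``proof'' is the paragraph immediately preceding the proposition: it invokes \cite[Cor.~5.4]{Shima} for the equivalence between the vanishing of the first Koszul form and $\nabla^g=D^\pm$ (equivalently $A=0$), and relies on the explicit computations in dimensions two and three to assert that the harmonic oscillator has $A=0$. You reproduce the Shima step and then fill in the parts the paper leaves implicit: the trace argument on~\eqref{eq:A.from.T} giving $A=0\iff T=0$, the deduction of flatness from $\nabla^\pm=\nabla^g$, and the integration of~\eqref{eq:wilczynski} with $\hat T=0$ that pins down the harmonic-oscillator family. In this sense your argument is more self-contained than the paper's, which does not spell out the converse $A=0\Rightarrow$ harmonic oscillator.

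Your closing caveat is well taken and is in fact the one point neither you nor the paper fully resolves: the first Koszul form is a trace, and its vanishing alone does not force the full cubic form to vanish for a general Hessian structure. Both you and the paper defer this to \cite{Shima}; if you want to make the argument airtight, it would be worth checking that the cited corollary really yields $\nabla^g=D$ (equivalently, that $\psi$ is quadratic in Hessian coordinates) rather than merely that $\det g$ is constant in those coordinates.
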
 

\noindent To be more specific, the associated cubic of the harmonic oscillator is $A=0$. Hence $\nabla^3\psi=0$, $\nabla=D^+=D^-=:D$.
This is the case if and only if, with flat coordinates $x_k$, $\psi=a\sum_k x_k^2+\sum b_kx_k+c$ for constants $a,b_j,c\in\RR$.
We conclude, since $\nabla^2\psi=D^2\psi=g=\sum_k dx_k^2$, that $a=\frac12$ and then decree (using the available freedom)
\[
	\eta^-=(x_1,\dots,x_n)
\]
for the Hessian coordinates, as well as
\[
	\psi_-=\frac12\sum x_k^2
\]
for the Hessian potential.
A computation perfectly analogous to that for the cases (ho) and (HO) then confirms the self-duality of this Hessian structure, and, in particular, $\eta_+=(x_1,\dots,x_n)$, $\psi_+=\psi_-$.

\subsubsection{Hessian potentials and structure functions for abundant systems}

We recall that in~\cite{KSV2023} it is shown that, for abundant systems of dimension $n\geq3$ on spaces of constant sectional curvature, there exists a function $\phi$, called \emph{structure function}, which satisfies
\begin{equation}\label{eq:A.via.structure.function}
	A = -\frac13\left( \nabla^3\phi + 4\kappa\, g\otimes d\phi \right)\,
\end{equation}
and is a priori independent from the Hessian potential.
A list of the structure functions in dimension $n=3$ can be found in \cite{KSV2023}. Note that, like Hessian potentials, also structure functions are not unique and subject to a gauge freedom, c.f.~\cite{KSV2023}.

\begin{observation}
	Comparing the structure functions of non-degenerate systems on Riemannian manifolds of constant sectional curvature in dimension $n=3$ with the Hessian potentials and dual Hessian potentials found earlier in this chapter, we observe that the structure functions coincide, modulo gauge freedom and modulo a factor of $-3$, with the Hessian potentials of their canonical Hessian structures, 
	$$ \phi=-3\psi_-\quad\text{modulo gauge freedom}\,. $$
\end{observation}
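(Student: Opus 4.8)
The statement to be proven is that the structure function $\phi$ of \eqref{eq:A.via.structure.function} and the Hessian potential $\psi_-$ of \eqref{eq:A.hessian} satisfy $\phi=-3\psi_-$ up to gauge. The plan is to regard both $\phi$ and $\psi_-$ as potentials reproducing the \emph{same} totally symmetric tensor $A$ through their respective third-order equations, and to show that $-3\psi_-$ solves the defining equation \eqref{eq:A.via.structure.function} of the structure function. Since the structure function is determined by $A$ only up to its gauge freedom --- which is precisely the kernel of the third-order operator $\chi\mapsto\nabla^3\chi+4\kappa\,g\otimes d\chi$ appearing in \eqref{eq:A.via.structure.function} --- this is enough to conclude $\phi=-3\psi_-$ modulo gauge. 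Thus everything reduces to a single pointwise tensor identity expressing the Levi-Civita third derivative $\nabla^3\psi_-$ in terms of $A$ and $d\psi_-$ on the constant-curvature background.

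First I would translate the Hessian condition \eqref{eq:g.hessian} into a second-order relation for $\psi_-$ with respect to the Levi-Civita connection. Expanding $g=\nabla^{-\hat A}d\psi_-$ by means of $\nabla^{-\hat A}=\nabla-\hat A$ from \eqref{eq:nabla+-} gives, after a one-line computation, $\nabla^2\psi_-=g-A(\cdot,\cdot,\grad\psi_-)$, where $A(\cdot,\cdot,\grad\psi_-)$ denotes the contraction of the last slot of $A$ with $\grad\psi_-$. This step is purely algebraic and carries no curvature information yet.

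Next I would apply $\nabla$ once more. Differentiating $\nabla^2\psi_-=g-A(\cdot,\cdot,\grad\psi_-)$ and substituting the second-order relation back in for $\nabla\grad\psi_-$ expresses $\nabla^3\psi_-$ as $-A$ plus two correction terms: one linear in $\nabla A$ and contracted with $\grad\psi_-$, and one quadratic in $A$ and contracted with $\grad\psi_-$. At this point the three inputs recorded in the proof of Theorem~\ref{thm:main} enter: the total symmetry of $\nabla A$ (the differential integrability condition \eqref{eq:integrability.differential}), the quadratic condition \eqref{eq:integrability.quadratic}, and the curvature identity $R_{ij}=A^{a}_{ib}A^{b}_{ja}-A_{a}{}^{ab}A_{bij}$ together with the constant-curvature form $R_{ijkl}=\kappa(g_{ik}g_{jl}-g_{il}g_{jk})$, i.e.\ $R_{ij}=(n-1)\kappa\,g_{ij}$. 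The role of constant curvature is exactly to convert the quadratic $A$-term into a multiple of $\kappa\,g\otimes d\psi_-$, reproducing the curvature correction present in \eqref{eq:A.via.structure.function}. Comparing the coefficients of the resulting identity with \eqref{eq:A.via.structure.function} then fixes the proportionality constant and yields $\phi=-3\psi_-$ up to the common gauge.

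The main obstacle is the tensor bookkeeping in this last step: showing that the $\nabla A$-term and the quadratic $A$-term, each contracted with $\grad\psi_-$ and symmetrised, conspire to produce precisely $4\kappa\,g\otimes d\psi_-$. The quadratic condition \eqref{eq:integrability.quadratic} controls only the trace-free symmetric part of the quadratic expression, and the Ricci identity controls only its trace, so closing the gap requires combining them with a differentiated form of the curvature identity (a Codazzi-type relation for $A$) to pin down the $\nabla A$-term. A useful cross-check is furnished by the examples computed earlier in this section: in the flat cases $(\kappa=0)$ the correction terms cancel identically, so the identity degenerates to the bare proportionality $\nabla^3\psi_-\propto A$, while the spherical cases (s2) and (S3) test the genuinely $\kappa$-dependent matching. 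As an alternative to this conceptual argument, one may verify the relation $\phi=-3\psi_-$ directly against the classified list of three-dimensional systems, for which the structure functions are tabulated in \cite{KSV2023} and the potentials $\psi_-$ are exactly those obtained above.
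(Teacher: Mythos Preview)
Your proposal is correct and follows essentially the same route as the paper. The paper likewise expands the Hessian condition for $\psi_-$ against the Levi-Civita connection (starting from $D^3\psi_-=-2A$ rather than from differentiating $\nabla^2\psi_-$, which is an equivalent entry point), and then reduces the correction terms $(\nabla_Z\hat A)(X,Y)(d\psi_-)+\hat A(\hat A(X,Y),Z)(d\psi_-)$ to $4\kappa\,g(X,Y)Z(\psi_-)$ on the constant-curvature background, yielding $-A=\nabla^3\psi_-+4\kappa\,g\otimes d\psi_-$ and hence $\phi=-3\psi_-$ modulo gauge.

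One point to sharpen: the ``three inputs from the proof of Theorem~\ref{thm:main}'' that you cite --- total symmetry of $\nabla A$, the quadratic trace-free condition, and the Ricci identity --- are the \emph{non-degenerate} integrability conditions, and by themselves they do not determine $\nabla A$ algebraically. The paper closes exactly the gap you flag as ``the main obstacle'' by invoking the full \emph{abundant} prolongation formula for $\nabla_lA_{ijk}$ (equation~(5.10) of~\cite{KSV2023}), together with the Weyl-part identity~(5.11) and the scalar trace relation~(7.10c) of the same reference; on a space of constant sectional curvature these collapse the $\nabla A$- and $A\!\cdot\!A$-terms to the single $4\kappa\,g\otimes d\psi_-$ contribution. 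Your ``Codazzi-type relation for $A$'' is precisely this prolongation, so your outline is right, but you should name that ingredient explicitly rather than the weaker conditions~\eqref{eq:integrability.quadratic}--\eqref{eq:integrability.differential}. Your fallback of verifying $\phi=-3\psi_-$ directly against the tabulated structure functions in~\cite{KSV2023} is, incidentally, exactly what the Observation itself records.
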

The aim of the present section is to elucidate this observation.
To this end, we write $\psi=\psi_-$ and $D=\nabla^{-A}$ for brevity, and denote the Levi-Civita connection of~$g$ by~$\nabla$.
We then compute, for arbitrary $X,Y,Z\in\mathfrak X(M)$,
\begin{align*}
	D^3\psi(X,Y,Z)
	&= D_Z(Dd\psi)(X,Y)
	\\
	&= Z((Dd\psi)(X,Y))-Dd\psi(D_ZX,Y)-Dd\psi(X,D_ZY)
	\\
	&= Z(\nabla d\psi(X,Y)-\hat A(X,Y)(d\psi))
	\\ &\qquad\qquad -g(\nabla_ZX,Y)-g(X,\nabla_ZY)-2A(X,Y,Z)
	\\
	&= \nabla^3\psi(X,Y,Z)-\nabla_Z\hat A(X,Y)(d\psi)
	\\ &\qquad\qquad -\hat A(X,Y)(D_Zd\psi-\hat A(Z,\cdot)(d\psi))-2A(X,Y,Z)
	\\
	&= \nabla^3\psi(X,Y,Z) - A(X,Y,Z)
	\\ &\qquad\qquad +(\nabla_Z\hat A)(X,Y)(d\psi)+\hat A(\hat A(X,Y),Z)(d\psi)
\end{align*}
%
For abundant systems in dimension $n\geq3$, there is a formula expressing the derivatives of $A$ in terms of an quadratic algebraic expression in $A$, see (5.10) of~\cite{KSV2023}. In terms of $A$, we may write this expression as
\begin{align*}
	\nabla_lA_{ijk} = \frac49\Pi_{(ijkl)}S_{ijl}\bar t_k + \Pi_{(ijk)}
	&\smash{\bigg(}
		3g_{ij}\mathsf{P}_{kl}
		+\frac19 S\indices{_{ij}^a}S_{kla}
	\\ &\quad
		+\frac4{9(n-2)} \mathring{\mathscr{S}}_{ik} g_{jl}
		+\frac1{9(n-2)} g_{ij}\mathring{\mathscr{S}}_{kl}
	\\ &\quad
		-\frac13 S_{ija}\bar t^a g_{kl}
		+\frac13 \,g_{jk}\,\bar t_{i}\bar t_{l}
		-\frac16 |\bar t|^2\ g_{ij}g_{kl}
	\\ &\quad
		-\frac{5}{18n(n-1)}\ g_{ij}g_{kl}\,\mathrm{trace}_g(\mathscr{S})
	\smash{\bigg)}\,,
\end{align*}
where $\mathsf{P}$ denotes the Schouten tensor of $g$ and where $\mathring{\mathscr S}$ is the trace-free part of the tensor field $\mathscr S$ defined by
\[
	\mathscr S(X,Y)=\mathrm{trace}_g\left(\hat S(X,\hat S(Y,-))\right)\,,
\]
where $\hat S=g^{-1}S$.
Moreover, c.f.~(5.11) of~\cite{KSV2023},
\[
	\Pi_{\textrm{Weyl}} \left( A(-,-,\hat A(-,-)) \right) = 0\,,
\]
where $\Pi_{\mathrm{Weyl}}$ is the projector onto the Weyl symmetric component, and
\[
	\hat S(\bar t)+\bar t\otimes\bar t-\frac1n\,|\bar t|^2\,g-\frac{1}{n-2}\,\mathring{\mathscr S} = 3\,\mathring{\mathsf P}
\]
where $\mathring{\mathsf P}$ denotes the trace-free part of the Schouten tensor of $g$.
Finally, as the underlying metric $g$ is of constant sectional curvature $\kappa$, we have $\mathring{\mathsf P}=0$ and hence
\[
	\mathrm{trace}_g(\mathscr{S})-(n+2)(n-1)\,|\bar t|^2=9n(n-1)\kappa\,,
\]
c.f.~(7.10c) of~\cite{KSV2023}. Equipped with these facts, we can prove the following claim.
\begin{proposition}
	Let $(M^n,g)$ be of constant sectional curvature $\kappa$ and dimension $n\geq3$.
	Then the Hessian potential $\psi_-$ of an abundant second-order superintegrable system on $(M,g)$ satisfies
	\[
		A = \nabla^3\psi_- + 4\kappa\, g\otimes d\psi_-
	\]
\end{proposition}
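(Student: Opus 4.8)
The plan is to eliminate the flat connection $D=\nabla^{-\hat A}$ from the expansion of $D^3\psi$ obtained just above, and to read the assertion off as a relation involving only the Levi-Civita data of $g$. Writing $\psi=\psi_-$ and $\psi^a=g^{ab}\partial_b\psi$, I would feed the two Hessian relations \eqref{eq:hessian} into that expansion: the identity $Dd\psi=g$ of \eqref{eq:g.hessian}, used in the equivalent form $\nabla^2\psi_{ij}=g_{ij}-A_{ija}\psi^a$, clears every second-derivative term, while the relation $D^3\psi=-2A$ of \eqref{eq:A.hessian} disposes of the third $D$-derivative. Collecting terms leaves a relation expressing $\nabla^3\psi$ through $A$, its first covariant derivative and $d\psi$ alone, in which the undifferentiated $A$ is exactly the one appearing on the left of the assertion. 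After that cancellation the whole proposition reduces to a single identity, linear in $d\psi$,
\begin{equation*}
	(\nabla_kA_{ija})\,\psi^a-A_{ij}{}^{a}A_{kab}\,\psi^b=4\kappa\,(g\otimes d\psi)_{ijk}, \tag{$\ast$}
\end{equation*}
all remaining signs being dictated by the conventions \eqref{eq:nabla+-}--\eqref{eq:A.hessian}; everything then hinges on $(\ast)$.

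To prove $(\ast)$ I would substitute the two structural inputs recalled immediately before the proposition. For the derivative term I insert the formula for $\nabla_lA_{ijk}$ with the derivative index set equal to $k$ and the third tensor slot contracted against $\psi^a$; the symmetry $\nabla_lA_{ijk}=\nabla_kA_{ijl}$ from the proof of Theorem~\ref{thm:main} (equivalently \eqref{eq:integrability.differential}) makes this contraction compatible with the total symmetry of the right-hand side. For the quadratic term I expand $A$ by means of \eqref{eq:A.from.T} and \eqref{eq:linear} into its trace-free part $S$ and the one-form $\bar t$, so that $A_{ij}{}^{a}A_{kab}\psi^b$ becomes a combination of $SS$-, $S\bar t$- and $\bar t\bar t$-contractions, whose Weyl-symmetric part is pinned down by $\Pi_{\mathrm{Weyl}}\big(A(-,-,\hat A(-,-))\big)=0$. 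Contracting both contributions with $d\psi$ and subtracting, the $\kappa$-independent pieces---those assembled from $S$, $\bar t$ and $\mathscr S$---are to cancel once the Schouten relation $\hat S(\bar t)+\bar t\otimes\bar t-\tfrac1n|\bar t|^2g-\tfrac1{n-2}\mathring{\mathscr S}=3\mathring{\mathsf P}$ is used to eliminate $\mathring{\mathscr S}$ and the trace relation $\trace_g(\mathscr S)-(n+2)(n-1)|\bar t|^2=9n(n-1)\kappa$ is used to eliminate $\trace_g(\mathscr S)$.

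Here constant sectional curvature is decisive: it forces $\mathring{\mathsf P}=0$, so the right-hand side of the Schouten relation disappears and the Schouten tensor collapses to $\mathsf P=\tfrac\kappa2\,g$. Then the curvature term $3\,g_{ij}\mathsf P_{kl}$ of the $\nabla_lA_{ijk}$ formula survives, and upon contraction with $d\psi$ it yields a multiple of $g\otimes d\psi$; together with the $\kappa$ fed in through the trace relation, these contributions should assemble to exactly $4\kappa\,(g\otimes d\psi)_{ijk}$, which is $(\ast)$.

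The step I expect to be the main obstacle is precisely this final collapse. The formula for $\nabla_lA_{ijk}$ and the quadratic relation carry several trace-free projections ($\Pi_\circ$, $\Pi_{\mathrm{Weyl}}$) as well as the symmetrisers $\Pi_{(ijk)}$ and $\Pi_{(ijkl)}$ acting on tensors built from $S$, $\bar t$ and $\mathscr S$, and the delicate point is to verify that, after contraction with $\psi^a$ and $\psi^b$, every non-curvature contribution cancels between the linear and the quadratic term, and that the coefficient of the surviving $\kappa\,g\otimes d\psi$ comes out to exactly $4$. As a sanity check one may test $(\ast)$ on the flat examples of the previous subsections, where $\kappa=0$ and both sides vanish identically, and observe that the proposition is equivalent to the proportionality $\phi=-3\psi_-$ between the structure function and the Hessian potential recorded in the Observation and governed by \eqref{eq:A.via.structure.function}.
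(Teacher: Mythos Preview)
Your proposal is correct and follows essentially the same approach as the paper: expand $D^3\psi_-$ in terms of Levi-Civita data, feed in \eqref{eq:g.hessian} and \eqref{eq:A.hessian} to reduce to an identity of the shape~$(\ast)$, and then verify $(\ast)$ by substituting the formula for $\nabla A$ together with the quadratic, Schouten and trace relations that hold on constant curvature backgrounds. The paper's proof is terser---it simply asserts that after substituting $\nabla A$ and $\hat A(\hat A(-,-),-)$ the expression collapses to $4\kappa\,g(X,Y)Z(\psi_-)$---whereas you spell out which relations are needed and flag the cancellation of the $\kappa$-independent $S$--$\bar t$--$\mathscr S$ terms as the point requiring care; one minor caveat is that your $(\ast)$ carries a sign on the quadratic term opposite to the one the paper's expansion produces, so double-check that against the convention $D=\nabla-\hat A$ when you write it up.
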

\begin{proof}
	We substitute $\nabla A$ and $\hat A(\hat A(X,Y),Z)$ in the expression for $D^3\psi_-$ obtained earlier in this chapter.
	Hence,
	\begin{align*}
		-A(X,Y,Z) &= \nabla^3\psi_-(X,Y,Z) +(\nabla_Z\hat A)(X,Y)(d\psi_-)+\hat A(\hat A(X,Y),Z)(d\psi_-)
		\\
		&= \nabla^3\psi_-(X,Y,Z) + 4\kappa\, g(X,Y) Z(\psi_-)
	\end{align*}
	Comparing this result with \eqref{eq:A.via.structure.function},
	we find
	\[
			-3\left(
				\nabla^3\psi_- + 4\kappa\, g\otimes d\psi_-
			\right) = \nabla^3\phi + 4\kappa\, g\otimes d\phi\,.
	\]
	and hence conclude $\phi=-3\psi_-$, up to the obvious gauge freedom.
\end{proof}

\subsection{Discussion and Outlook}

In this paper we have shown that abundant second-order superintegrable systems on Riemannian manifolds of constant sectional curvature carry a natural Hessian structure. For the abundant systems defined on flat space, the flat connection of the natural Hessian structures  appear to not differ much from the Levi-Civita connection, but they do encode the abundant system entirely. 
The class of non-degenerate systems is by definition broader than that of abundant systems, but to date no non-abundant non-degenerate systems have been found. There has therefore been some speculation that such example might not exist. Yet, the conditions~\eqref{eqs:hessian.conditions} appear naturally, suggesting that either the conditions for non-degenerate systems encode stronger restrictions which are not visible straightforwardly, or that non-abundant non-degenerate systems have to exist.

\section*{Acknowledgements}
The authors acknowledge support from the DFG project grant \#540196982 and the \emph{Forschungsfonds} of the University of Hamburg.
Andreas Vollmer is grateful to Benjamin McMillan and also thanks Vicente Cortés for discussions.

\bibliographystyle{amsalpha}
\bibliography{superintegrable-hessian}
\medskip

\end{document}